
\documentclass[letterpaper, 10 pt, conference]{ieeeconf}  



\overrideIEEEmargins                                      

\usepackage{amsmath,amssymb,amsfonts}
\usepackage{graphicx} 
\usepackage{mathtools}
\usepackage{algorithm}
\usepackage{algpseudocode}
\usepackage{graphicx}
\usepackage{textcomp}
\usepackage{caption}
\usepackage{subcaption}
\usepackage[normalem]{ulem}
\usepackage{balance}
\usepackage[
backend=biber,
style=ieee,
citestyle=ieee,
sorting=none
]{biblatex}

\addbibresource{root.bib} 

\newtheorem{theorem}{Theorem}
\newtheorem{lemma}{Lemma} 
\newtheorem{assumption}{Assumption}
\newtheorem{definition}{Definition} 
\newtheorem{proposition}{Proposition} 
\newtheorem{remark}{Remark}


\newcommand{\R}{\mathbb{R}} 
\newcommand{\node}{\mathcal{V}}     
\newcommand{\edge}{\mathcal{E}}     
\newcommand{\gd}{\mathcal{V}_G}     
\newcommand{\nbrTdN}[2]{\mathcal{X}_{#1} (#2)} 
\newcommand{\nbrAdv}[2]{\overline{\mathcal{X}}_{#1} (#2)}

\newcommand{\xMax}[1]{x_{\text{max}} (#1)}
\newcommand{\xMin}[1]{x_{\text{min}} (#1)}

\newcommand{\xTd}[1]{x_{\text{td}} (#1)} 
\newcommand{\rTd}[1]{r_{\text{td}} (#1)}  
\newcommand{\rBarTd}[1]{\Bar{r}_{\text{td}} (#1)} 
\newcommand{\xBarTd}[1]{\Bar{x}_{\text{td}} (#1)}

\newcommand{\one}{\mathbf{1}} 

\newcommand{\delR}[1]{\Delta R (#1)}

\title{\LARGE \bf
Resilient Dynamic Average Consensus based on Trusted agents
}

\author{Shamik Bhattacharyya, Rachel Kalpana Kalaimani
}

\begin{document}

\balance

\maketitle
\thispagestyle{empty}
\pagestyle{empty}

\begin{abstract}
In this paper we address discrete-time dynamic average consensus (DAC) of a multi-agent system in the presence of adversarial attacks. The adversarial attack is considered to be of Byzantine type, which compromises the computation capabilities of the agent and sends arbitrary false data to its neighbours. We assume few of the agents cannot be compromised to adveraries which we term as trusted agents. We first formally define resilient DAC in the presence of Byzantine adversaries. Then we propose our novel Resilient Dynamic Average Consensus (ResDAC) algorithm that ensures the trusted and ordinary agents achieve resilient DAC in the presence of adversarial agents. The only requirements are that of the trusted agents forming a connected dominating set, and the first order differences of the reference signals being bounded. We do not impose any restriction on the tolerable number of adversarial agents that can be present in the network. We also do not restrict the reference signals to be bounded. Finally we provide numerical simulations to illustrate the effectiveness of the proposed ResDAC algorithm.  
\end{abstract}

\section{INTRODUCTION}
The problem of \emph{dynamic average consensus} (DAC) was first studied in \cite{DAC_Murray_IFAC05}. It considers a network of agents where each agent locally measures a time-varying network quantity to obtain a local reference signal. The objective is to develop a distributed algorithm that will allow each agent to asymptotically track the average of the time-varying reference signals. \cite{DAC_Mag19} provides a comprehensive study of the available literature on DAC.

The expanding range of applications of distributed consensus based algorithms in multiagent systems has also unfortunately attracted a rise in cases of cyber attacks \cite{AttckSrvey_IndInfo22}. Such attacks on networked multi-agent systems try to disrupt the proper functioning of the distributed protocols and prevent them from achieving the common objective. The type of attacks can be broadly classified into three categories based on the main functions of the agents - sensing, communication and computation. The attack on sensors like \emph{false data injection} \cite{ResDAC} try to corrupt the measurement data received by the agents leading to faulty operations. The second type of attacks, like \emph{Denial of Service} \cite{ResCon_DoS}, tamper with the communication capabilities to prevent the regular exchange of information among the neighbouring agents. In both the above type of attacks, the agents are usually still able to carry on with their computation based on the designed distributed protocols. This is where the third category of attacks, like \emph{Byzantine}, \emph{malicious} attacks, focus on. They tend to corrupt the computation capabilities of the agents and send arbitrary false information to its neighbours. The agents under such attacks cannot be expected to follow the designed protocol and thus the common objective of the network requires to redefined based on the non-adversarial agents. 

With such wide variety of adversarial attacks, it is imperative to develop suitable resilient distributed algorithms that would ensure common objectives are still achieved in the presence of attacks. There exists appreciable literature on resilient (static) consensus algorithms \cite{ResCon_Sundaram_13}, \cite{ResCon_Ishii_20} where the agents asymptotically converge to a common fixed value in the presence of adversarial attacks. The commonly used methods to design such algorithms are the \emph{mean subsequence reduced (MSR)} approach \cite{ResCon_Sundaram_13}, and the \emph{trusted agents} based approach \cite{ResCon_Trstd_14}. Resilient (static) consensus algorithms have also inspired the development of other distributed consensus-based algorithms resilient to adversarial attacks, like resilient distributed estimation \cite{REWB}, resilient distributed optimization \cite{ResDO_Trstd_TAC20}, etc. When it comes to resilient dynamic average consensus, a recent work \cite{ResDAC} addressed the problem considering attacks on sensor nodes and the time-varying reference signal to be bounded. They show that in the presence of uniformly bounded false-data injection attacks, the agents achieve dynamic average consensus within a desired tolerance.

To the best of our knowledge, apart from \cite{ResDAC} there is no other work addressing dynamic consensus in the presence of adversarial attacks till date. In this paper we address resilient dynamic consensus in the presence of Byzantine adversaries which is a different scope of adversarial attack compared to false-data injection attacks considered in \cite{ResDAC}, as explained before. Our main contributions in this paper are listed below
\begin{itemize}
    \item We formally define resilient DAC based on trusted agents (Definition 4). We first explain why in the presence of Byzantine or malicious adversaries,  it is not feasible to track the average of the reference signals of all the agents in the network. Then we present our definition motivated by the work on trusted agent based resilient algorithms.

    \item We propose the novel ResDAC algorithm that provides state update laws for the trusted and ordinary agents to ensure they track the average of the reference signals of the trusted agents.

    \item We show that when the trusted agents induce a connected dominating set and the first order differences of the local reference signals are bounded, the proposed ResDAC algorithm ensures that the trusted and ordinary agents achieve resilient dynamic average consensus (Theorem 1). We do not enforce any limitation on the number of adversarial agents that may be present in the network. 

\end{itemize}

\textit{Notations}. $\R$ denotes the set of \emph{real} numbers, and $\R^N$ represents the $N$-dimensional Eucledian space. $|\mathcal{S}|$ denotes the \textit{cardinality} of any given set $\mathcal{S}$. $\one$ denotes a vector of all $1$s, $\one = (1,1,\hdots,1)$ of appropriate dimensions. For a real-valued vector $v$, $v^T$ denotes the \emph{transpose} of the vector. Similarly for a real-valued matrix $M$, $M^T$ denotes the \emph{transpose} of the matrix, and $[M]_{i:}$ denotes its $i$-th row. $[M]_{i:j,k:l}$ denotes the sub-matrix of $M$ consisting of the entries lying within its $i$ to $j$-th rows and $k$ to $l$-th columns. 

\section{Problem Formulation} \label{sec:probFormulatn} 
 \subsection{System Model} \label{sec:pF_sysModel}
We consider a network of $M$ agents represented by the set $\node = \{1, \hdots, M \}$. The agents interact over a communication topology represented by an undirected graph $\Gamma = (\node,\edge)$. 
The agents are of 3 types - trusted, ordinary and adversarial. The \emph{trusted} agents, represented by set $\node_T$, maintain a high level of security such that it can never be compromised by an attacker. They form a kind of secure backbone for the network. The ordinary agents, on the other hand, do not have any such additional security measures and are thus vulnerable to attacks. The set of ordinary agents, denoted by $\node_O$, represents the agents which are neither trusted nor under adversarial attack. The adversarial agents are the ones under attack by an adversary, and are represented by the set $\node_A$. Let the number of trusted, ordinary and adversarial agents be denoted by $m_T$, $m_O$ and $m_A$ respectively. Note that $m_T + m_O + m_A = M$. 
We refer to the set of trusted and ordinary agents together as the set of good agents $\gd = \node_T \cup \node_O$ with $|\gd| = m_T + m_O = N$. Without loss of generalization, let us represent the set of agents as $\node = \{1, \hdots, m_T, m_T+1, \hdots, N, N+1, \hdots, M\}$, thus arranging the agents starting with the trusted, followed by the ordinary and finally the adversarial ones. 

Next we define a connectivity property of the graph from \cite{ResDO_Trstd_TAC20}.
\begin{definition}[Connected Dominating Set (CDS)]
A set $\mathcal{S}$ of graph $\Gamma = (\node,\edge)$ is a CDS if  
\begin{itemize}
    \item all nodes belonging to $\mathcal{S}$ form a connected graph, and
    \item each node which does not belong to $\mathcal{S}$ has at least one neighbour in $\mathcal{S}$.
\end{itemize} 
\end{definition}
Now we state our first assumption based on the trusted agents and CDS which we will later use for our main result.
\begin{assumption} \label{asmpn:td_cds}
The set of trusted agents $\node_T$ induce a CDS of $\Gamma = (\node,\edge)$.
\end{assumption}
This assumption can be seen as investing in higher security for a subset of the agents in a network, the trusted agents, to ensure they function properly in the presence of any number of adversarial agents. Compared to the popular graph robustness approach, used in many resilient consensus based algorithm, it is shown in \cite{TdNodes_Xeno_TCNS18} that controlling the number and location of trusted nodes provides any desired network connectivity and robustness that too without the need of adding extra links among agents. The condition of each ordinary node being connected to at least one trusted neighbour is comparable to the commonly used robustness criteria that requires each non-adversarial agent to have at least one non-adversarial neighbour. 

Next we define a subgraph of $\Gamma$, which we later use in Lemma \ref{lem:eqModel} to develop the equivalent model representation of our proposed algorithm.
\begin{definition}
    $\Gamma_G = (\node_G,\edge_G)$ is a subgraph of $\Gamma$, where $\node_G = \node_T \cup \node_O$, and $\edge_G \in \edge$ consists of all the edges interconnecting the trusted agents, and the edges incoming to the ordinary agents from their trusted neighbours.
\end{definition} 

 \subsection{Resilient Dynamic Average Consensus} \label{sec:pF_RDC}
 Each agent synchronously measures a local continuous physical process $r_i(t) : \R \rightarrow \R$, $t \in \mathbb{N}$. We refer to $r_i(t)$ as the \emph{reference signal} of agent $i$ at time instant $t$. The collective aim of the agents is to arrive at consensus over the average of the reference signals in a distributed manner. For this, each agent maintains its own local variable as $x_i(t)$. At every time instant $t$, each agent : 
 \begin{itemize}
     \item obtains the reference signal $r_i(t)$ 
     \item communicates its own state $x_i(t)$ to its neighbours 
     \item receives neighbours' state values $x_j(t), j \in \mathcal{N}_i$ 
     \item updates its own state as 
     \begin{equation*}
         x_i(t+1) = f_i:\{x_i(t),r_i(t),r_i(t-1),x_j(t)|j\in \mathcal{N}_i\} \rightarrow \R
     \end{equation*}
 \end{itemize} 
Now some of the agents are under adversarial attack. We first define the type of adversarial attacks we consider in this paper.
\begin{definition}[Byzantine and Malicious Adversary] 
    An agent $i \in \node_A$ is said to be \emph{Byzantine} if it updates its state using some arbitrary function $f'_i$ and sends different values to different neighbors. The agent is \emph{malicious} if it updates its state using some arbitrary function $f'_i$ and sends the same value to all of its neighbours. 
\end{definition} 
As malicious adversary is a special case of the Byzantine type, we henceforth consider only Byzantine adversarial agents. Now with such adversarial agents sharing arbitrary and manipulated values to its neighbours, it would not be possible to obtain the true information regarding the value of their reference signals. So tracking the average of the reference signals of all the agents in a distributed manner is not feasible. Our focus here is to ensure resilience to adversarial agents based on trusted agents. In studies of resilient consensus protocols based on trusted agents, the results ensure that the trusted and ordinary agents achieve consensus over some value based on the trusted agents. In \cite{ResCon_Trstd_14}, resilient consensus is achieved by trusted and ordinary agents over the limit value of trusted agents, while in \cite{ResDO_Trstd_TAC20} the final solution of resilient distributed optimization lies within the convex hull of minimizers of the local cost function of trusted agents. Motivated by the previous results, we aim to ensure that the trusted and ordinary agents are able to track the average of the reference signals of the trusted agents in the presence of adversaries, which is formally defined below.
\begin{definition}[Resilient Dynamic Average Consensus]
    In the presence of Byzantine adversarial agents, the good agents are said to achieve resilient dynamic average consensus based on trusted agents, if all the good agents track the average of the reference signals of the trusted agents, i.e. $\forall i \in \gd$, 
    \begin{itemize}
        \item[] $ \lim_{t \rightarrow \infty} |x_i(t) - \rBarTd{t-1}| \leq \epsilon$ for some $ \epsilon \geq 0 $.
    \end{itemize}
\end{definition}

The challenge then is to design an algorithm that provides a suitable $f_i$ for the good agents to update their states in a way that they are able to achieve resilient dynamic average consensus in the presence of adversarial agents within the network.

\section{Results}  \label{mainRslt} 
 \subsection{Algorithm} \label{sec:mR_algo} 
 \begin{algorithm} 
\caption{ResDAC} 
\label{alg:rdc}
 \textbf{Given} : $\mathcal{T}_i$ and $r_i(0) \in \R$ for each $i \in \gd$, $m_T$
\\ \textbf{Initialize} : $x_i(1) \in \R$ for each $i \in \gd$
\\ \textbf{for} $t = 1,2,\hdots $ \textbf{do} 
\begin{itemize} 
    \item[] \textbf{for} each trusted agent $i \in \node_T $ \textbf{do} 
\begin{itemize}
    \item measure $r_i(t)$ and calculate $\Delta r_i(t) = r_i(t) - r_i(t-1)$
    \item collect states of trusted neighbours $\mathcal{T}_i$ 
    \item update $x_i(t+1)$ as 
     \begin{equation} \label{eq:upLaw_td}
         x_i(t+1) = \sum_{j \in \mathcal{T}_i \cup \{i\}} v_{ij}(t) x_j(t) + \Delta r_i(t)
     \end{equation}     
\end{itemize}    
    \item[] \textbf{for} each ordinary agent $i \in \node_O $ \textbf{do}
\begin{itemize} 
    \item measure $r_i(t)$ and calculate $\Delta r_i(t) = r_i(t) - r_i(t-1)$
    \item collect states of trusted neighbours $\mathcal{T}_i$ in $\mathcal{S}_i(t)$   
    \item sort the values in $\mathcal{S}_i(t) \cup \{x_i(t)\}$ and store the min. and max. values in $x_i^{\text{min}}(t)$ and $x_i^{\text{max}}(t)$ respectively 
    \item create $\mathcal{U}_i(t) = \{ j| x_j(t) \in [x_i^{\text{min}}(t), x_i^{\text{max}}(t)] , j \in \mathcal{N}_i \cup \{i\} \}$ 
    \item update $x_i(t+1)$ as 
     \begin{equation} \label{eq:upLaw_ord}
         x_i(t+1) = \sum_{j \in \mathcal{U}_i(t)} \frac{1}{|\mathcal{U}_i(t)|} x_j(t) + \Delta r_i(t)
     \end{equation}     
\end{itemize} 
\textbf{end for}
\end{itemize}
\textbf{end for}
\\ \textbf{Output : } $x_i(t)$ for all $i \in \gd$
\end{algorithm}
 In this section we introduce our \emph{Resilient Dynamic Average Consensus} (ResDAC) algorithm. The ResDAC algorithm is designed for the good agents to achieve resilient dynamic average consensus. We consider that each good agent $i$ knows its set of trusted neighbours, $\mathcal{T}_i = \{ j|j \in \mathcal{N}_i \cap \node_T \}$. At the start of every iteration, each good agent $i$ measures its local reference signal $r_i(t)$ and then calculates the change in the reference signal as $\Delta r_i(t) = r_i(t) - r_i(t-1)$. Our algorithm has two distinct parts - one for the state update of the trusted agents, and the other for the update process followed by the ordinary agents. Let us first consider the case of a trusted agent $i \in \node_T$. After collecting the states of its trusted neighbours $\mathcal{T}_i$, agent $i$ updates its state following the update law in \eqref{eq:upLaw_td}. The weight $v_{ij}(t)$ is defined as $v_ij(t) = 1/m_T$ for $ j \in \mathcal{T}_i$, $1 - |\mathcal{T}_i|/m_T$ for $j=i$, and $0$ otherwise. This choice of weights for the trusted agents' update ensures that the exact average of the reference signals of the trusted agents is tracked by all the good agents in the presence of adversarial neighbours. This is later illustrated through the equivalent model in Lemma \ref{lem:eqModel}, and in the proof of Theorem \ref{thm:consensus}. Now considering the case of an ordinary agent $i \in \node_O$. After collecting the states of its trusted neighbours in $\mathcal{S}_i(t) = {x_j(t)|j \in \mathcal{T}_i}$, it sorts the values in $\mathcal{S}_i(t) \cup \{x_i(t)\}$. From the sorted list, the minimum value is stored as $x_i^{\text{min}}(t)$ and the maximum value is stored as $x_i^{\text{max}}(t)$. Then among all its neighbours, only those are selected whose state value falls in the range $[x_i^{\text{max}}(t), x_i^{\text{min}}(t)]$ and are enlisted in the set $\mathcal{U}_i(t) = \{ j| x_j(t) \in [x_i^{\text{min}}(t), x_i^{\text{max}}(t)] , j \in \mathcal{N}_i \cup \{i\} \}$. Finally agent $i$ follows the update law \eqref{eq:upLaw_ord} to update its state. 

 \subsection{Results} \label{sec:mR_mainRslt}
 Let us first define a few terms, which we then use to state our assumption to relatively bound the first order differences of the reference inputs.
 \begin{equation*}
     \Delta r_{\text{min}}(t) = \min_{i \in \node_G} \Delta r_i(t), 
     \Delta r_{\text{max}}(t) = \max_{i \in \node_G} \Delta r_i(t)
 \end{equation*}
\begin{assumption}  \label{asmpn:delR_bound}
    The first-order differences of the reference signals is relatively bounded. Specifically, there exists a time-invariant constant $\theta > 0$ such that 
    \begin{equation} \label{eq:delR_bound}
        \delR{t} := \Delta r_{\text{max}}(t) - \Delta r_{\text{min}}(t) \leq \theta, \forall t \geq 0
    \end{equation}
\end{assumption}
The above assumption is to ensure that the reference signals are not varying too fast. In a distributed approach it takes time for the information of an agent to percolate to all other agents in the network. So it is fairly reasonable to assume that the local signals, contributing to the common time-varying parameter being tracked by the agents, vary slowly enough to allow the agents to track the desired value.

Let $\xTd{t} = [x_1(t), \hdots, x_{m_T}(t)]^T$ and $x(t) = [\xTd{t}^T, x_{m_T+1}(t), \hdots, x_{N}(t)]^T$ be the vectors representing the states of the trusted and good agents respectively. Also let $\rTd{t} = [r_1(t), \hdots, r_{m_T}(t)]^T$ and $r(t) = [\rTd{t}^T, r_{m_T+1}(t), \hdots, r_{N}(t)]^T$ be the vectors representing the reference signals of the trusted and good agents respectively. Then $\Delta \rTd{t} = [\Delta r_1(t), \hdots, \Delta r_{m_T}(t)]^T$ and $\Delta r(t) = [\Delta \rTd{t}^T, \Delta r_{m_T+1}(t), \hdots, \Delta r_{N}(t)]^T$. Now we establish the existence of a transition matrix for the update of the states of the good agents based on our ResDAC algorithm. The properties of the transition matrix are also presented which are crucial for establishing our main result.
\begin{lemma} \label{lem:eqModel}
    Consider an undirected graph $\Gamma$ where trusted agents satisfy Assumption \ref{asmpn:td_cds}, and the corresponding subgraph $\Gamma_G$. Then for the ResDAC algorithm, there exists $W(t) \in \R^{N \times N}$ for all $t$ such that 
    \begin{equation} \label{eq:upLaw_vctr}
        x(t+1) = W(t) x(t) + \Delta r(t)
    \end{equation} 
    where $W(t) = [W_{ij}(t)]$ has the following properties :
    \begin{enumerate}
        \item[A1)] $W(t)$ is row stochastic, i.e., $W(t) \one = \one$ ; 
        \item[A2)] $W_{ij}(t) \neq 0 \iff (j,i) \in \edge_G \cup \{(i,i)\}$, $\forall i \in \gd$ ; 
        \item[A3)] $\forall$ $W_{ij}(t) \neq 0$, $W_{ij}(t) \geq \alpha = 1/(1 + d^{\text{in}}_{\text{max}})$, $\forall i \in \gd$.
    \end{enumerate}
    Moreover, let the upper left square block of $W(t)$ be denoted by sub-matrix $\Hat{W}(t) = [W_{ij}(t)]_{1:m_T,1:m_T}$. Then $\Hat{W}(t)$ satisfies the following property :
    \begin{enumerate}
        \item[B1)] $\Hat{W}(t)$ is doubly stochastic, i.e., $\Hat{W}(t) \one = \one$ and $\one^T \Hat{W}(t) = \one^T$
    \end{enumerate}
\end{lemma}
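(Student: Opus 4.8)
\textit{Proof approach.} The plan is to exhibit the matrix $W(t)$ explicitly, one row per good agent, by rewriting the two update laws of Algorithm~\ref{alg:rdc} in the form \eqref{eq:upLaw_vctr}, and then to read off properties A1--A3 and B1 directly from the construction; Assumption~\ref{asmpn:td_cds} is used only to guarantee $\mathcal{T}_i \neq \emptyset$ for every ordinary agent, so that $x_i^{\text{min}}(t)$ and $x_i^{\text{max}}(t)$ are well defined. For a trusted agent $i \in \node_T$ the law \eqref{eq:upLaw_td} is already of the desired shape, so I would set $W_{ij}(t) := v_{ij}(t)$ for $j \in \gd$. Since $v_{ij}(t)$ is supported on $\mathcal{T}_i \cup \{i\}$, which for a trusted node is exactly $\{j : (j,i) \in \edge_G\} \cup \{i\}$, property A2 holds for these rows; each nonzero entry is $1/m_T$ or $1 - |\mathcal{T}_i|/m_T \geq 1/m_T > 0$, and the row sums to one, which gives A3 and A1 for the trusted rows.

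The ordinary rows are the substantive part of the argument. Fix $i \in \node_O$. The crucial observation is that $x_i^{\text{min}}(t)$ and $x_i^{\text{max}}(t)$ are, by construction, the extreme values of the finite set of good states $\{x_k(t) : k \in \mathcal{T}_i \cup \{i\}\}$, so the interval $[x_i^{\text{min}}(t), x_i^{\text{max}}(t)]$ is precisely its convex hull. Every index kept in $\mathcal{U}_i(t)$ --- in particular any adversarial neighbour, whose reported value is otherwise uncontrolled --- has $x_j(t) \in [x_i^{\text{min}}(t), x_i^{\text{max}}(t)]$, hence $x_j(t) = \sum_{k \in \mathcal{T}_i \cup \{i\}} \mu^j_k x_k(t)$ for some nonnegative weights $\mu^j_k$ summing to one (and for $j \in \mathcal{T}_i \cup \{i\}$, which lies in $\mathcal{U}_i(t)$ because trusted neighbours' states are in range, one may take $\mu^j_j = 1$). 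Substituting these expansions into \eqref{eq:upLaw_ord} and collecting coefficients gives $x_i(t+1) = \sum_{k \in \mathcal{T}_i \cup \{i\}} W_{ik}(t) x_k(t) + \Delta r_i(t)$, where $W_{ik}(t) = \frac{1}{|\mathcal{U}_i(t)|} \sum_{j \in \mathcal{U}_i(t)} \mu^j_k \geq \frac{1}{|\mathcal{U}_i(t)|} > 0$ for $k \in \mathcal{T}_i \cup \{i\}$ and $W_{ij}(t) = 0$ otherwise. The $i$-th row is therefore supported exactly on $\mathcal{T}_i \cup \{i\} = \{j : (j,i) \in \edge_G\} \cup \{i\}$ (an ordinary node receives $\edge_G$-edges only from $\mathcal{T}_i$), which is A2; its entries sum to $\frac{1}{|\mathcal{U}_i(t)|}\sum_{j \in \mathcal{U}_i(t)} \sum_{k} \mu^j_k = 1$, which is A1; and since $\mathcal{U}_i(t) \subseteq \mathcal{N}_i \cup \{i\}$ gives $|\mathcal{U}_i(t)| \le 1 + \degInMax$, every nonzero entry is at least $1/(1+\degInMax) = \alpha$, while the trusted-row entries are bounded below by $1/m_T$; this establishes A3.

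Finally, B1 follows with little extra work: the support of every trusted row of $W(t)$ lies entirely within the trusted columns, so $\Hat{W}(t)$ retains all the nonzeros of those rows and $\Hat{W}(t)\one = \one$ is simply A1 restricted to $\node_T$. For $\one^T \Hat{W}(t) = \one^T$ I would use that $\Gamma$ is undirected: for distinct trusted agents $i,j$ one has $j \in \mathcal{T}_i \iff i \in \mathcal{T}_j$, so $W_{ij}(t) = W_{ji}(t)$ (equal to $1/m_T$ when $i,j$ are adjacent and $0$ otherwise), whence $\Hat{W}(t)$ is symmetric; a symmetric row-stochastic matrix is doubly stochastic. I expect the ordinary-row reduction to be the only real obstacle --- certifying that the uncontrolled adversarial contributions can be re-expressed purely in the good-agent coordinates and within the $\edge_G$ sparsity pattern without destroying row-stochasticity --- and this is exactly what the convex-hull observation provides; the remainder is bookkeeping.
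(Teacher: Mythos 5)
Your construction is correct and follows the same strategy as the paper: take $W_{ij}(t)=v_{ij}(t)$ on the trusted rows, use undirectedness/symmetry of the trusted block for $\one^T\Hat{W}(t)=\one^T$, and for an ordinary agent re-express the filtered neighbours' states in the coordinates $\{x_k(t):k\in\mathcal{T}_i\cup\{i\}\}$ so that the row is supported on $\edge_G\cup\{(i,i)\}$. The one place you genuinely diverge is the ordinary-row reduction: the paper writes each retained ordinary/adversarial value as $\lambda x_i^{\text{max}}(t)+(1-\lambda)x_i^{\text{min}}(t)$ and reallocates its weight onto the two extremal agents $j_a,j_b$, working this out explicitly only for a single such neighbour and then asserting the general case "proceeds similarly" (that sentence in the paper is in fact left unfinished); your convex-combination expansion $x_j(t)=\sum_k\mu^j_k x_k(t)$ with $\mu^j_j=1$ for $j\in\mathcal{T}_i\cup\{i\}$ handles arbitrarily many adversarial entries and ties at the extremes in one stroke, and the bound $W_{ik}(t)\geq\mu^k_k/|\mathcal{U}_i(t)|=1/|\mathcal{U}_i(t)|$ cleanly delivers both the "only if" direction of A2 and the lower bound in A3. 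So your version is, if anything, a more complete proof of the same lemma. One caveat you inherit from the paper rather than introduce: for the trusted rows the nonzero entries are only bounded below by $1/m_T$, and neither you nor the paper verifies that $1/m_T\geq 1/(1+\degInMax)$; strictly speaking A3 should be stated with $\alpha=\min\{1/m_T,\,1/(1+\degInMax)\}$ (this does not affect how $\alpha$ is used downstream, since only positivity and uniformity matter).
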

Now we present the main result of our work on resilient dynamic average consensus based on trused agents using the proposed ResDAC algorithm.
\begin{theorem} \label{thm:consensus}
    Consider an undirected graph $\Gamma$ where trusted agents satisfy Assumption \ref{asmpn:td_cds}. Also consider that the reference signals of all good agents satisfy Assumption \ref{asmpn:delR_bound}. Then the ResDAC algorithm ensures that all the ordinary and trusted agents achieve resilient dynamic average consensus, in the presence of Byzantine adversaries. In particular, 
    \begin{equation} \label{eq:thm}
        \lim_{t \rightarrow \infty} |x_i(t) - \rBarTd{t-1}| \leq \epsilon \text{, for all } i \in \gd  
    \end{equation} 
    where $\rBarTd{t} = \frac{1}{m_T} \rTd{t}$ and $\epsilon = \theta (N-1) (1 + \alpha^{-\frac{1}{2}N(N+1) + 1}) + |\xBarTd{1} - \rBarTd{0}|$. 
\end{theorem}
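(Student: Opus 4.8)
The plan is to prove \eqref{eq:thm} in two stages. Stage~1: show that the average $\xBarTd{t}$ of the trusted agents' states tracks $\rBarTd{t-1}$ \emph{exactly}, up to a fixed offset. Stage~2: show that every good agent's state stays within a $\theta$-proportional band around $\xBarTd{t}$.

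\textbf{Stage 1: exact average tracking by the trusted subsystem.} By the weight rule following \eqref{eq:upLaw_td}, a trusted agent updates using only its trusted neighbours' states, so the first $m_T$ components of \eqref{eq:upLaw_vctr} form the closed subsystem $\xTd{t+1} = \Hat{W}(t)\,\xTd{t} + \Delta\rTd{t}$. Premultiplying by $\tfrac{1}{m_T}\one^{T}$ and using that $\Hat{W}(t)$ is doubly stochastic (property~B1 of Lemma~\ref{lem:eqModel}), the mixing term disappears: $\xBarTd{t+1} = \xBarTd{t} + \tfrac{1}{m_T}\one^{T}\Delta\rTd{t} = \xBarTd{t} + \rBarTd{t} - \rBarTd{t-1}$. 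Hence $\xBarTd{t} - \rBarTd{t-1}$ is independent of $t$ and equals $\xBarTd{1} - \rBarTd{0}$ for all $t \ge 1$, which contributes the term $|\xBarTd{1} - \rBarTd{0}|$ to $\epsilon$. It then suffices to bound $|x_i(t) - \xBarTd{t}|$, uniformly in $i \in \gd$ and for large $t$, by $\theta(N-1)\bigl(1 + \alpha^{-\frac12 N(N+1)+1}\bigr)$.

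\textbf{Stage 2: agreement with the trusted average.} I would pass to the disagreement vector $y(t) := x(t) - \one\,\xBarTd{t}$, which by row stochasticity (A1) obeys $y(t+1) = W(t)y(t) + g(t)$ with $g_i(t) = \Delta r_i(t) - \tfrac{1}{m_T}\one^{T}\Delta\rTd{t}$; since $\Delta r_i(t)$ and $\tfrac{1}{m_T}\one^{T}\Delta\rTd{t}$ both lie in $[\Delta r_{\min}(t), \Delta r_{\max}(t)]$, Assumption~\ref{asmpn:delR_bound} gives $\|g(t)\|_{\infty} \le \delR{t} \le \theta$, and $\tfrac{1}{m_T}\one^{T}y_T(t) = 0$. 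The structural facts I would use come from Lemma~\ref{lem:eqModel} with Assumption~\ref{asmpn:td_cds}: $W(t)$ is row stochastic with every nonzero entry $\ge \alpha$, it has the \emph{fixed} sparsity pattern $\edge_G \cup \{(i,i)\}$, the trusted nodes form a connected (hence, with self-loops, strongly connected) sub-digraph, every ordinary node receives an edge from a trusted node, and no edge joins two ordinary nodes. I would then list the good agents as a breadth-first order $i_1,\dots,i_N$ rooted at a trusted agent (trusted agents first, in a BFS order of their connected subgraph, then the ordinary agents), so each $i_k$ has an in-neighbour in $\{i_1,\dots,i_{k-1}\}$ via $\edge_G$. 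The core claim, by induction on $k$, is that after a window of length $O(k)$ the states $x_{i_1}(t),\dots,x_{i_k}(t)$ all lie in one interval of width $O(\theta)$ that drifts with $\xBarTd{t}$: the inductive step uses that $i_k$'s update is a convex combination involving a state already confined to the width-$O(\theta)$ band of $\{i_1,\dots,i_{k-1}\}$, so over $O(k)$ steps $x_{i_k}$ is geometrically pulled into that band, up to an enlargement by a factor $\alpha^{-O(k)}$ accounting for the worst-case mixing time of a $k$-agent subsystem and the $\theta$-bounded forcing. Compounding this enlargement over the $N-1$ nontrivial stages yields the factor $\alpha^{-(2+3+\cdots+N)} = \alpha^{-\frac12 N(N+1)+1}$ and the prefactor $N-1$; adding back Stage~1 gives \eqref{eq:thm}.

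\textbf{Main obstacle.} Stage~1 is short and rests only on the double stochasticity of $\Hat{W}(t)$. The work is in Stage~2: making the induction quantitative — choosing the window lengths; controlling the \emph{drift} of the band (it moves with $\xBarTd{t}$, whose one-step increments are unbounded, only their spread being controlled by $\theta$); accounting the $\theta$-sized perturbation accumulated each step; and verifying that the connected-dominating-set ordering supplies, inside each window, the uniformly positive mixing column that forces a span contraction by at least $\alpha^{O(k)}$ at stage $k$. Closing these estimates to \emph{exactly} the constant $\epsilon$ of \eqref{eq:thm}, rather than to a merely finite bound, is the delicate part; the quadratic exponent $-\tfrac12 N(N+1)+1$ is the signature of summing the per-stage window lengths $2,3,\dots,N$ along the breadth-first order through the trusted backbone and out to the dominated ordinary agents.
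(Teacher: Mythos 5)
Your proposal follows essentially the same route as the paper: your Stage~1 is exactly the paper's conservation argument, which uses $\one^{T}\Hat{W}(t)=\one^{T}$ to show $e(t):=\xBarTd{t}-\rBarTd{t-1}=e(1)$ for all $t$, contributing the $|\xBarTd{1}-\rBarTd{0}|$ term; your Stage~2 is the paper's span-contraction argument, where your BFS layers rooted at a trusted agent are precisely the sets $\mathcal{D}_0,\dots,\mathcal{D}_{\mathcal{L}}$ of Proposition~\ref{prop:xMax_xMin}, the windowed recursion $y(t+T_1)\le(1-\eta)y(t)+\sum_q\delR{q}$ with $\eta=\alpha^{\frac12 N(N+1)-1}$ and $T_1=N-1$ yields the same geometric sum and hence the same constant $\theta(N-1)(1+\alpha^{-\frac12 N(N+1)+1})$, and the final step $|x_i(t)-\rBarTd{t-1}|\le(\xMax{t}-\xMin{t})+|e(1)|$ is identical (the paper tracks the span $\xMax{t}-\xMin{t}$ rather than your centred disagreement vector, but these are interchangeable here since $\xBarTd{t}\in[\xMin{t},\xMax{t}]$). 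The "delicate part" you flag — the quantitative per-layer contraction — is exactly what the paper isolates as Proposition~\ref{prop:xMax_xMin}, so your accounting of the exponent via the summed window lengths $2+3+\cdots+N$ matches the paper's intended argument.
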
 
\begin{remark}
    In Theorem 1 we can see that there is no bound on the tolerable number of adversarial agents present in the network. This means that even in the presence of any large number of adversarial agents, the ResDAC algorithm will ensure resilient dynamic average consensus of the all the good agents. The only condition required is that of the trusted agents admitting a CDS.   
\end{remark} 
\begin{remark}
    We can further infer from Theorem 1 that the ResDAC algorithm ensures proper tracking of the average of the reference signals of the trusted agents by all the good agents as long as the reference signals are slowly varying. This is specified by the bound on their first order differences. This allows for the application of the algorithm to track the time-varying average of various types of reference signals, like asymptotically decaying, sinusoidally varying and even ramp type signals.
\end{remark}  
Before starting with the proof of Theorem \ref{thm:consensus}, we first present some results which would be used later in the proof. Let for every $s$, fix some $k \in \node_T$ and define $\mathcal{D}_0 = \{ k \}$. Let $\mathcal{D}_1 \subset \node_G \backslash \{k\}$ denote the set of good agents with which agent $k$ communicates at time $s$. $\mathcal{D}_1$ is non-empty by Assumption \ref{asmpn:td_cds}. Using induction we have a set $\mathcal{D}_{l+1} \subset \node_G \backslash \mathcal{D}_1 \cup \hdots \cup \mathcal{D}_l$ consisting of those agents to which some $i \in \mathcal{D}_1 \cup \hdots \cup \mathcal{D}_l$ communicates at time step $s+l$. $\mathcal{D}_{l+1}$ is non-empty by Assumption \ref{asmpn:td_cds}, provided $\node_G \backslash \mathcal{D}_1 \cup \hdots \cup \mathcal{D}_l$ is non-empty. Thus $\mathcal{D}_1,\hdots, \mathcal{D_L}$ is a partition of $\node_G$ for some $\mathcal{L} \leq N - 1$. Now we present a result inspired from \cite[Lemma 3.1]{DT_DAC}.
\begin{proposition} \label{prop:xMax_xMin}
    Consider the RDC algorithm and suppose Assumption \ref{asmpn:td_cds} holds. Let $s \geq 0$ and $k \in \node_T$ be fixed, and consider the associated $\mathcal{D}_1, \hdots, \mathcal{D_L}$. Then for every $l \in \{1, \hdots, \mathcal{L} \}$, there exists a real number $\eta_l > 0$ such that for every integer $p \in [l,\mathcal{L}]$, and for $i \in \mathcal{D}_l$, it holds for $t=s+p$ 
    \begin{equation} \label{eq:xi_lwrBnd}
        x_i(t) \geq \xMin{s} + \sum_{q=0}^{p-1} \Delta r_{\text{min}}(s+q) + \eta_l (x_k(s) - \xMin{s})
    \end{equation}
    \begin{equation} \label{eq:xi_uprBnd}
        x_i(t) \leq \xMax{s} + \sum_{q=0}^{p-1} \Delta r_{\text{max}}(s+q) - \eta_l (\xMax{s} - x_k(s))
    \end{equation}
\end{proposition}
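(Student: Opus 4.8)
The plan is to carry out everything in the equivalent linear model $x(t+1)=W(t)x(t)+\Delta r(t)$ supplied by Lemma~\ref{lem:eqModel}, so that the adversarial agents have already been absorbed and only the row-stochastic matrices $W(t)$ remain, together with the structural facts that $W_{ij}(t)\neq 0 \iff (j,i)\in\edge_G\cup\{(i,i)\}$ and $W_{ij}(t)\geq\alpha$ whenever $W_{ij}(t)\neq 0$. Since \eqref{eq:xi_lwrBnd} and \eqref{eq:xi_uprBnd} are mirror images of each other --- swap $\min\leftrightarrow\max$, reverse the inequality, and use $\xMax{s}-x_k(s)\geq 0$ in place of $x_k(s)-\xMin{s}\geq 0$ --- I would only argue the lower bound \eqref{eq:xi_lwrBnd}, the upper bound being identical. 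A preliminary observation, proved by a one-line induction on $p$ from row stochasticity of $W(t)$ and $\Delta r_i(t)\geq\Delta r_{\text{min}}(t)$ for good $i$, is the baseline estimate $x_i(s+p)\geq \xMin{s}+\sum_{q=0}^{p-1}\Delta r_{\text{min}}(s+q)$ valid for \emph{all} $i\in\node_G$ and all $p\geq 0$; the real content of the proposition is the extra nonnegative term $\eta_l(x_k(s)-\xMin{s})$ acquired by an agent in layer $\mathcal{D}_l$.

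The main argument is an induction on the layer index $l$, with an inner induction on $p$ that propagates the extra term forward in time. For $l=1$: an $i\in\mathcal{D}_1$ is a good neighbour of the trusted agent $k$, so $(k,i)\in\edge_G$ and hence $W_{ik}(s)\geq\alpha$; splitting the $k$-term off in $x_i(s+1)=\sum_m W_{im}(s)x_m(s)+\Delta r_i(s)$ and bounding every other state below by $\xMin{s}$ (their weights summing to $1-W_{ik}(s)$) gives $x_i(s+1)\geq \xMin{s}+\Delta r_{\text{min}}(s)+W_{ik}(s)(x_k(s)-\xMin{s})\geq \xMin{s}+\Delta r_{\text{min}}(s)+\alpha(x_k(s)-\xMin{s})$, using $x_k(s)\geq\xMin{s}$. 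For $p>1$ the extra term only needs to be preserved: because $W_{ii}(t)\geq\alpha$ always, peeling off the self-term of agent $i$ and bounding the remaining states below by the baseline estimate at time $s+p-1$ shows a coefficient $c$ at $s+p-1$ becomes $\alpha c$ at $s+p$, so taking the worst coefficient over $p\in[1,\mathcal{L}]$ (legitimate since $x_k(s)-\xMin{s}\geq 0$) one sets $\eta_1=\alpha^{\mathcal{L}}$. For the step $l\to l+1$, the key point is that by construction any $i\in\mathcal{D}_{l+1}$ is reached from a \emph{trusted} parent $j\in\mathcal{D}_l$ with $(j,i)\in\edge_G$ --- this is where Assumption~\ref{asmpn:td_cds} enters, since in $\Gamma_G$ only trusted agents have out-edges and, forming a CDS, they reach every good agent --- whence $W_{ij}(s+l)\geq\alpha$. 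Substituting the inductive hypothesis for $x_j(s+l)$ (the case $p=l$) and the baseline estimate for all other states at time $s+l$ into the update for $x_i(s+l+1)$ gives the claim at $p=l+1$ with coefficient $\alpha\eta_l$, and the self-loop argument carries it to every $p\in[l+1,\mathcal{L}]$ at one factor of $\alpha$ per step, so $\eta_{l+1}=\alpha^{\mathcal{L}-l}\eta_l$ works; unrolling yields $\eta_l=\alpha^{l\mathcal{L}-l(l-1)/2}>0$, positive because $\mathcal{L}\leq N-1$.

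I expect the only genuine obstacles to be the two structural checks rather than the algebra: first, confirming that the constant $\alpha$ and the support pattern asserted by Lemma~\ref{lem:eqModel} are precisely what licenses both the one-step ``injection'' of $x_k(s)$ at the correct layer and its persistence (damped only by $\alpha$ per step) through the self-loop; and second, verifying that the layer sets $\mathcal{D}_l$ genuinely transmit information through trusted parents --- i.e. that every agent first appearing at layer $l+1$ receives strictly positive weight from some agent of layer $l$ at time step $s+l$. Once these are pinned down, the rest is the routine bookkeeping of pushing the nonnegative quantity $\eta_l(x_k(s)-\xMin{s})$ (respectively $\eta_l(\xMax{s}-x_k(s))$) through the convex-combination updates while tracking how the accumulated $\Delta r_{\text{min}}$ (respectively $\Delta r_{\text{max}}$) terms build up.
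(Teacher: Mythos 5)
Your argument is correct, and it is the standard layered-propagation proof that the cited reference (Lemma 3.1 of the discrete-time DAC paper) uses; note that the paper's own appendix ``proof'' of this proposition is a one-line placeholder (``Without loss of generality, consider $s\geq 0$.''), so your write-up actually supplies the missing argument rather than duplicating one. The three load-bearing points are all handled: the baseline bound from row stochasticity, the injection of $x_k(s)$ into layer $\mathcal{D}_1$ via $W_{ik}(s)\geq\alpha$ (valid because every edge of $\edge_G$ out of the trusted agent $k$ carries weight at least $\alpha$ by properties A2--A3), and the persistence of the extra term through the self-loop $W_{ii}(t)\geq\alpha$. Your constants $\eta_l=\alpha^{l\mathcal{L}-l(l-1)/2}$ are more conservative than necessary (tracking coefficients per time step rather than per layer gives $\alpha^{\mathcal{L}}$ uniformly), but they are positive and dominate the $\eta=\alpha^{\frac{1}{2}N(N+1)-1}$ used in the proof of Theorem~\ref{thm:consensus}, so everything downstream is consistent. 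One imprecision worth fixing: by the paper's construction, an agent in $\mathcal{D}_{l+1}$ is one to which \emph{some} agent of $\mathcal{D}_1\cup\cdots\cup\mathcal{D}_l$ communicates at time $s+l$, so the trusted parent $j$ need not lie in $\mathcal{D}_l$ itself; you should run a strong induction over all earlier layers and use that $\eta_{l'}\geq\eta_l$ for $l'\leq l$ (which your decreasing sequence satisfies), after which the step to $p=l+1$ goes through exactly as you describe.
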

The proof of the above proposition is presented in Appendix. Now we proceed to prove Theorem \ref{thm:consensus}.

 \begin{proof}[Proof of Theorem \ref{thm:consensus}]
    Let $\eta = \alpha^{\frac{1}{2}N(N+1) - 1}$. Then for any $l \in [0,1, \hdots , N-1]$, $\eta \leq \eta_l$. From \eqref{eq:xi_lwrBnd}, by replacing $t$ and $s$ with $t_1 = t+ \mathcal{L}$ and $t$ respectively, for every $t \geq 0$ we get 
    \begin{align*}
       \xMin{t_1} &= \min_{l \in \{ 0,\hdots , \mathcal{L} \} } \min_{i \in \mathcal{D}_l} x_i(t_1) \\ 
          &\geq \xMin{t} + \sum_{q=t}^{t_1 -1} \Delta r_{\text{min}}(q) + \min_l \eta_l (x_k(t) - \xMin{t})
    \end{align*} 
    \begin{equation} \label{eq:thm1_xMint1LwrBnd}
        \xMin{t_1} \geq \xMin{t} + \sum_{q=t}^{t_1 -1} \Delta r_{\text{min}}(q) + \eta (x_k(t) - \xMin{t})
    \end{equation} 
    Similarly from \eqref{eq:xi_uprBnd} we get 
    \begin{equation} \label{eq:thm1_xMaxt1UprBnd}
        \xMax{t_1} \leq \xMax{t} + \sum_{q=t}^{t_1 -1} \Delta r_{\text{max}}(q) - \eta (\xMax{t} - x_k(t))
    \end{equation} 
    From \eqref{eq:thm1_xMint1LwrBnd} and \eqref{eq:thm1_xMaxt1UprBnd} we get 
    \begin{equation*} 
        y(t_1) = \xMax{t_1} - \xMin{t_1} \leq (1- \eta ) y(t) + \sum_{q=t}^{t_1 -1} \Delta R(q) 
    \end{equation*} 
     With $T_1 = N-1$ we have $t_1 \leq t + T_1$ for all $l$. 
     Now from \eqref{eq:upLaw_vctr} in Lemma \ref{lem:eqModel} and the properties of non-negative entries and row-stochasticity of $W(t)$, we can write for any $i \in \gd$  
     \begin{equation*}
         \xMin{t} + \Delta r_{\text{min}}(t) \leq x_i(t+1) \leq \xMax{t} + \Delta r_{\text{max}}(t)
     \end{equation*} 
     From the above eqn. we can write $y(t+1) \leq y(t) + \Delta R(t)$. Thus we have 
     \begin{equation*}
         y(t+T_1) \leq (1- \eta ) y(t) + \sum_{q=t}^{T_1 -1} \Delta R(q) .
     \end{equation*} 
     Now for a given integer $k \geq 1$, let $T_k = k(N-1)$. Then with $t=1$ we can say 
     \begin{equation*}
         y(T_n + 1) \leq (1- \eta )^n y(1) + \sigma (n) 
     \end{equation*}
    where 
    \begin{equation*}
        \sigma (n) = (1- \eta )^{n-1} \sum_{q=1}^{T_1 -1} \Delta R(q) + \hdots + \sum_{q=T_{n-1}}^{T_n -1} \Delta R(q)
    \end{equation*}
    Let $\lambda$ be the largest integer such that $\lambda(N-1) \leq t$ for any $t \geq 1$. Then, using $y(t+1) \leq y(t) + \Delta R(t)$, we have $\forall t \geq 0$,
    \begin{align*}
        y(t) &\leq y(T_\lambda) + \sum_{q=T_\lambda}^{t-1} \Delta R(q) \\
         &\leq (1- \eta )^\lambda y(1) + \sigma (\lambda) + \sum_{q=T_\lambda}^{t-1} \Delta R(q).
    \end{align*}
    Now as $y(1) \geq 0$ and from the definition of $\lambda$ we have $t/(N-1) - 1 \leq \lambda$, so we can write 
    \begin{equation} \label{eq:thm1_ytEqn}
        y(t) \leq (1- \eta )^{\frac{t}{N-1} - 1} y(1) + \sigma (\lambda) + \sum_{q=T_\lambda}^{t-1} \Delta R(q).
    \end{equation}
    Now we know from Assumption \ref{asmpn:delR_bound} that $\Delta R(t) \leq \theta$. Using this in \eqref{eq:thm1_ytEqn} we can write $y(t) \leq \omega(t)$, where
    \begin{equation*}
        \omega(t) = (1- \eta )^{\frac{t}{N-1} - 1} y(1) + \theta (N-1) (1 + \frac{1 - (1 - \eta)^t}{\eta})
    \end{equation*}
    So in the limiting case we have 
    \begin{equation}
        \lim_{t \rightarrow \infty} y(t) \leq \omega_\infty \text{, where } \omega_\infty = \theta (N-1) (1 + \alpha^{-\frac{1}{2}N(N+1) + 1})
    \end{equation} 
    With the above relation we establish that all the good agents achieve consensus with a bound defined by $\omega$. Next we proceed to show that consensus is achieved over the time-varying average of the reference signals of the trusted agents.
    Let $e(t) := \xBarTd{t} - \rBarTd{t-1}$ where $\xBarTd{t} = \frac{1}{m_T} \one^T \xTd{t}$. Then using the fact that $\one^T \hat{W}(t) = \one^T$ $\forall t$, we get
\begin{align*}
    e(t+1) &= \frac{1}{m_T} \one^T (\hat{W}(t) \xTd{t} + \Delta \rTd{t} - \rTd{t} ) \\ 
     &= \frac{1}{m_T} \one^T (\xTd{t} - \rTd{t-1}) \\
     &= \frac{1}{m_T} \one^T (\hat{W}(t-1) \xTd{t-1} + \rTd{t-2} ) \\ 
     &= \frac{1}{m_T} \one^T (\xTd{t-1} - \rTd{t-2}) \\
     &= \vdots \\ 
     &= \frac{1}{m_T} \one^T (\xTd{1} - \rTd{0}) = e(1)
\end{align*}
Then we can write for all good agents $i \in \gd$
\begin{align*}
    |x_i(t) - \rBarTd{t-1}| &\leq |x_i(t) - \xBarTd{t}| + |e(1)| \\ 
     &\leq |\xMax{t} - \xMin{t}| + |e(1)| \\ 
     &\leq \omega(t) + |e(1)|
\end{align*} 
So in the limiting case we have 
\begin{equation}
    \lim_{t \rightarrow \infty} |x_i(t) - \rBarTd{t-1}| \leq \omega_\infty + |e(1)| = \epsilon
\end{equation} 
\end{proof}

\section{Numerical Simulation} \label{sec:numSim}
In this section we illustrate the effectiveness of our proposed ResDAC algorithm through some numerical simulation results. We consider a network of total 9 agents with the trusted, ordinary and adversarial agents marked in blue, green and red colour respectively. 
\begin{figure}
    \centering
    \begin{subfigure}[b]{0.275\columnwidth}
         \centering
         \includegraphics[scale=0.3]{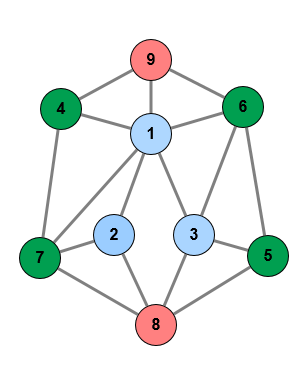}
         \caption{}
         \label{fig:2advA}
     \end{subfigure}
    \hfill
    \begin{subfigure}[b]{0.7\columnwidth}
         \centering
         \includegraphics[scale=0.12]{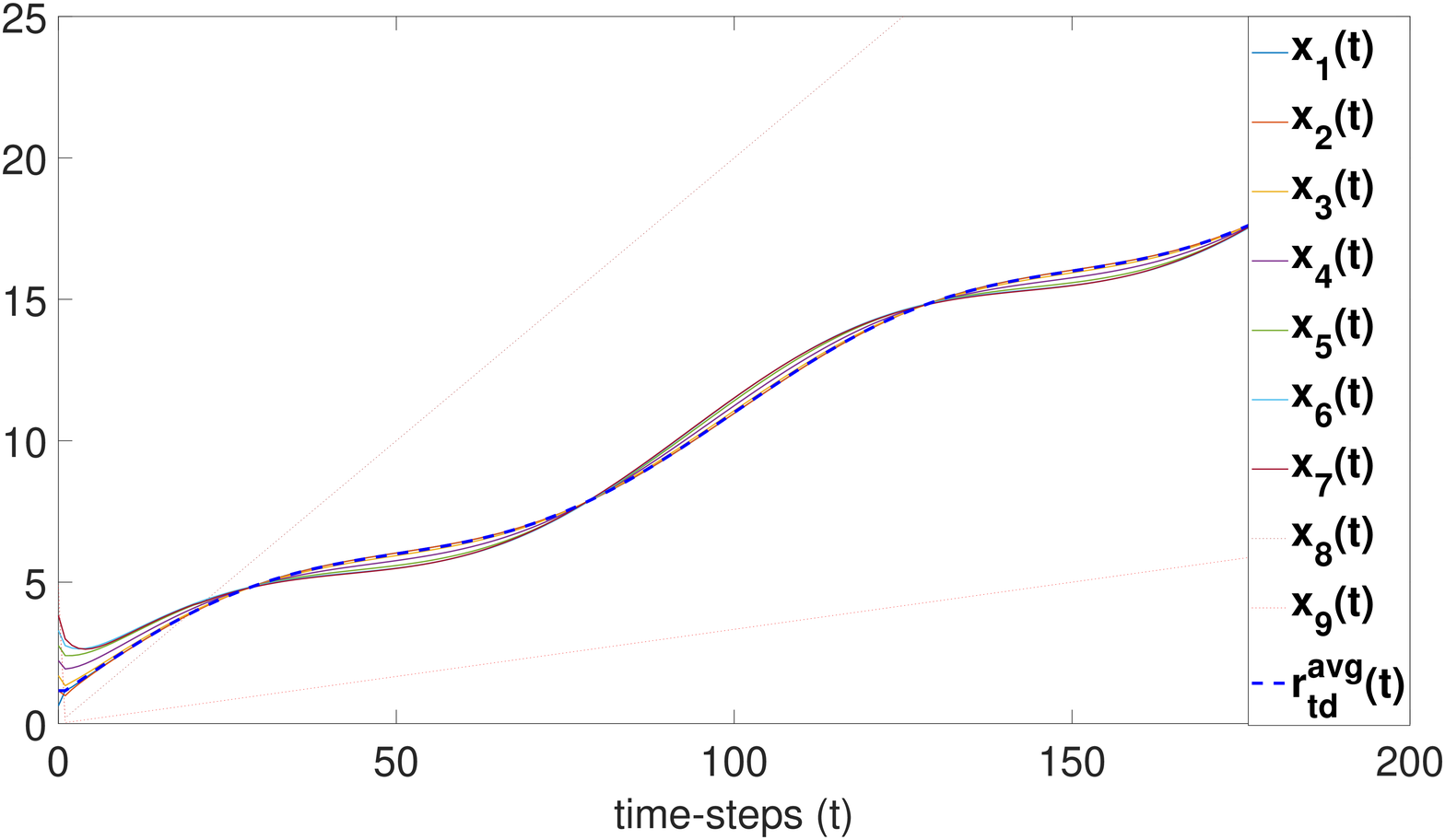}
         \caption{}
         \label{fig:2advB}
     \end{subfigure}
    \caption{(a) Graph with 2 adversarial agents; (b) Performance of ResDAC algorithm in tracking sinusoidal-ramp type signals.}
    \label{fig:2adv}
\end{figure} 
First we consider the network shown in Fig.\ref{fig:2advA} where agents 1-3 are trusted, agents 4-7 are ordinary, and agents 8-9 are adversarial. We consider the reference signals to be a combination of both sinusoidal and ramp type signals as $r_i(t) = 0.5i + t/10 + 0.2isin(0.02 \pi t)$. The adversarial agents try to deviate the estimates far away from an accepted value with $x_8(t) = t/5$ and $x_9(t) = t/30$, shown in Fig.\ref{fig:2advB} by the red dotted lines. Fig.\ref{fig:2advB} clearly shows that following the proposed ResDAC algorithm, the good agents, marked by the solid lines, are able to track the average of the reference signals of the trusted agents, marked by the blue dashed line. The algorithm ensures efficient tracking of the desired value even in the presence of the adversarial agents sharing arbitrary values to its neighbours and trying to disrupt the tracking process.
\begin{figure}
    \centering
    \begin{subfigure}[b]{0.275\columnwidth}
         \centering
         \includegraphics[scale=0.3]{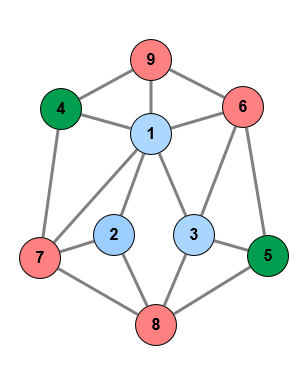}
         \caption{}
         \label{fig:4advA}
     \end{subfigure}
    \hfill
    \begin{subfigure}[b]{0.7\columnwidth}
         \centering
         \includegraphics[scale=0.12]{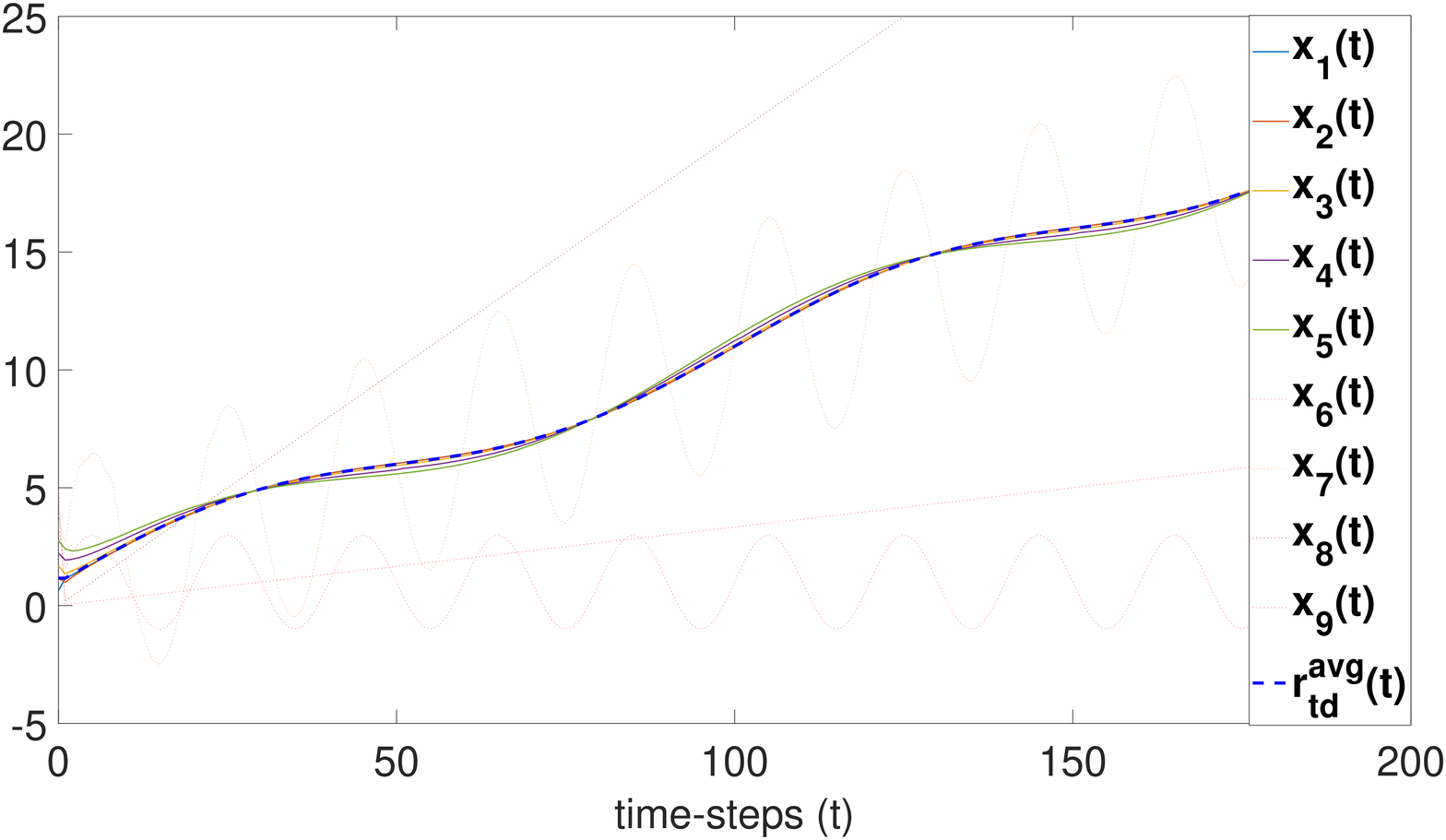}
         \caption{}
         \label{fig:4advB}
     \end{subfigure}
    \caption{(a) Graph with 4 adversarial agents; (b) Performance of ResDAC algorithm in tracking sinusoidal-ramp type signals.}
    \label{fig:4adv}
\end{figure} 

Now to illustrate the fact that the ResDAC algorithm is effective in the presence of any number of adversarial agents as long as the trusted agents induce a CDS, we now consider four adversarial agents in the network compared to two in the previous case. From Fig.\ref{fig:2advA}, consider that two ordinary agents, 6 and 7, have now come under adversarial attack. The new network scenario is shown in Fig.\ref{fig:4advA}. Agents 1-3 remain the trusted agents, and along with the two ordinary agents, 4 and 5, induce a CDS. We consider the same signals as before. Only the two new adversarial agents are assigned as $x_6(t) = 1 + 2sin(0.1 \pi t), x_7(t) = 3.5 + t/10 + 5sin(0.1 \pi t)$. Fig.\ref{fig:4advB} clearly shows that in the presence of different arbitrary inputs by the four adversarial agents, the ordinary and trusted agents are able to efficiently track the desired value following the ResDAC algorithm. Note that the ordinary agents, 4 and 5, are connected to only one trusted neighbour, 1 and 3 respectively. All the other neighbours for the ordinary agents are adversarial. 

\section{Conclusion}
In this paper, we define resilient DAC based on trusted agents in the presence of Byzantine and malicious adversaries. Then we develop the novel ResDAC algorithm to ensure that the trusted and ordinary agents are able to achieve resilient DAC in the presence of adversarial attacks. We show that when the trusted agents induce a CDS, and the reference signals are slowly varying, all the good agents are able to track the desired time-varying average value by following the ResDAC algorithm. Through numerical simulations we show the effectiveness of the ResDAC algorithm even in the presence of large number of adversaries within the network. Future direction of work is to consider other modes of adversarial attacks. 

\section*{APPENDIX}

\subsection{Proof of Theorem \ref{thm:consensus}} \label{app:proof_consensus} 

\begin{proof}
The ResDAC algorithm has two distinct parts - one for the state update of the trusted agents, and the other for the ordinary agents. Let us first consider the case for the trusted agents' state update presented in update law \eqref{eq:upLaw_td}. 
 A trusted agent $i \in \node_T$ uses states from only its trusted neighbours and so $\Hat{W}_{ij}(t) =  W_{ij}(t) = 1/m_T$ for all $(j,i) \in \edge_G$. The diagonal element, $\Hat{W}_{ii}(t) =  W_{ii}(t) = 1 - |\mathcal{T}_i|/m_T$. So $W_{ij}(t) \neq 0$ and $\Hat{W}_{ij}(t) \neq 0$ for all $j \in \edge_G \cup \{(i,i)\}$, which satisfies property 1B. As $|\mathcal{T}_i| \leq m_T - 1$, $\Hat{W}_{ii}(t) =  W_{ii}(t) \geq 1/m_T$. So property 1C is also satisfied. Now for the $i$-th row of $W(t)$, 
 \begin{equation*}
     \sum_{j=1}^N W_{ij}(t) = \sum_{j \in \mathcal{T}_i} W_{ij}(t) + W_{ii}(t) = 1.
 \end{equation*} 
 The above equation satisfies 1A. Now using this and the fact that the graph is undirected we establish 2A as 
 \begin{equation*}
     \sum_{k=1}^{m_T} \Hat{W}_{ki}(t) = \sum_{j=1}^{m_T} \Hat{W}_{ij}(t) = \sum_{j \in \mathcal{T}_i} W_{ij}(t) + W_{ii}(t) = 1.
 \end{equation*} 
 So we have established all the properties of the sub-matrix $\Hat{W}(t)$, and for the first $m_T$ rows of matrix $W(t)$. Now we proceed to establish the properties for the remaining rows of matrix $W(t)$ by considering the case of the state update of the ordinary agents. Consider an ordinary agent $i \in \node_G$. It utilizes the states of only those neighbours $x_j(t), j \in \mathcal{U}_i(t)$ that lie within the maximum and minimum states among its neighbouring trusted nodes and itself. So we have $|\mathcal{U}_i(t)| \leq |\mathcal{N}_i|$, which further means $1/|\mathcal{U}_i(t)| \geq 1/(|\mathcal{N}_i| + 1)$. Let us divide the states being used by agent $i$ for the update at iteration $t$ into three sets : 
\begin{itemize}
    \item $\nbrTdN{i}{t}$ : the set of its trusted neighbours, 
    \item $\{ i \}$ : the set containing the agent itself, 
    \item $\nbrAdv{i}{t}$ : the set of its ordinary and adversarial neighbours.
\end{itemize}
Then we rewrite the first part of the state update in \eqref{eq:upLaw_ord} as 
\begin{equation} \label{eq:lem1_stUpdt3prts}
    \sum_{j \in \mathcal{U}_i(t)} \frac{x_j(t)}{|\mathcal{U}_i(t)|} = \sum_{j \in \nbrTdN{i}{t}} \frac{x_j(t)}{|\mathcal{U}_i(t)|} + \frac{x_i(t)}{|\mathcal{U}_i(t)|} + \sum_{j \in \nbrAdv{i}{t}} \frac{x_j(t)}{|\mathcal{U}_i(t)|}
\end{equation}
Now we look at two separate cases based on the set $\nbrAdv{i}{t}$. 

Case I [$\nbrAdv{i}{t} = \phi$] : $W_{ij}(t) = 1/|\mathcal{U}_i(t)|$ holds for all $(j,i) \in \edge_G$ and $j=i$. This satisfies all the conditions in Lemma \ref{lem:eqModel}.

Case II [$\nbrAdv{i}{t} \neq \phi$] : this means at least one ordinary or adversarial agent exists in $\mathcal{U}_i(t)$. Let us consider the case when there is only one ordinary or adversarial neighbour $k \in \mathcal{U}_i(t)$. Then from RDC algorithm we can say that $x_i^{\text{min}}(t) \leq x_k(t) \leq x_i^{\text{max}}(t)$. So there exists $\lambda \in [0,1]$ such that $x_k(t) = \lambda x_i^{\text{max}}(t) + (1- \lambda) x_i^{\text{min}}(t)$. Let us also consider that only two distinct agents $j_a$ and $j_b$ exists in $\nbrTdN{i}{t} \cup \{ i \}$ such that $x_{j_a}(t) = x_i^{\text{max}}(t)$ and $x_{j_b}(t) = x_i^{\text{min}}(t)$. Let $\mathcal{U}'_i(t) := \mathcal{U}_i(t) \backslash \{ j_a \cup j_b \cup k \} $. 
Then from \eqref{eq:lem1_stUpdt3prts} we have 
\begin{align*}
    \sum_{j \in \mathcal{U}_i(t)} \frac{x_j(t)}{|\mathcal{U}_i(t)|} &= \sum_{j \in \nbrTdN{i}{t}} \frac{x_j(t)}{|\mathcal{U}_i(t)|} + \frac{x_i(t)}{|\mathcal{U}_i(t)|} + \frac{x_k(t)}{|\mathcal{U}_i(t)|} \\ 
     = \sum_{j \in \nbrTdN{i}{t}} \frac{x_j(t)}{|\mathcal{U}_i(t)|} &+ \frac{x_i(t)}{|\mathcal{U}_i(t)|} + \frac{\lambda x_i^{\text{max}}(t) + (1- \lambda) x_i^{\text{min}}(t)}{|\mathcal{U}_i(t)|} \\ 
     = \sum_{j \in \mathcal{U}'_i(t)} \frac{x_j(t)}{|\mathcal{U}_i(t)|} &+ \frac{(1+\lambda) x_{j_a}(t) + (2- \lambda) x_{j_b}(t)}{|\mathcal{U}_i(t)|} 
\end{align*}
Thus we have 
\begin{align*}
    W_{ij_a}(t) = \frac{1+\lambda}{|\mathcal{U}_i(t)|} &, W_{ij_b}(t) = \frac{2-\lambda}{|\mathcal{U}_i(t)|}, \\ 
    \text{ and } \forall j \in \mathcal{U}'_i(t), W_{ij}(t) &= \frac{1}{|\mathcal{U}_i(t)|}.
\end{align*}
 This satisfies the properties 1B and 1C of Lemma \ref{lem:eqModel}. Now, $|\mathcal{U}'_i(t)| + (1+\lambda) + (2-\lambda) = |\mathcal{U}'_i(t)| + 3$, which is equal to $|\mathcal{U}_i(t)|$ from the definition of $\mathcal{U}'_i(t)$. This satisfies property 1A of row-stochasticity. 
 Note that in case there are more than one agent in $\nbrTdN{i}{t} \cup \{ i \}$ with their state values equal to the maximum value $x_i^{\text{max}}(t)$, then the agent $j_a$ can be chosen arbitrarily from among them, and the above analysis would still hold true. Similarly for the case of more than one agent in $\nbrTdN{i}{t} \cup \{ i \}$ with their state values equal to the minimum value $x_i^{\text{min}}(t)$.
 Now in case of two or more ordinary or adversarial agents in $\nbrAdv{i}{t}$
 Proceeding similarly for the case of , or multiple nodes with maximum or minimum values in $\mathcal{U}_i(t)$, we will arrive at the same result showing the existence of a suitable matrix $W(t)$ satisfying the required conditions. 
\end{proof}

\subsection{Proof of Proposition \ref{prop:xMax_xMin}} \label{app:proof_propn} 
\begin{proof}
    Without loss of generality, consider $s \geq 0$. 
\end{proof}

\printbibliography

@article{ResDO_Trstd_TAC20, 
  author={Zhao, Chengcheng and He, Jianping and Wang, Qing-Guo},
  journal={IEEE Transactions on Automatic Control}, 
  title={Resilient Distributed Optimization Algorithm Against Adversarial Attacks}, 
  year={2020},
  volume={65},
  number={10},
  pages={4308-4315}
  %doi={10.1109/TAC.2019.2954363}
}

@article{DT_DAC,
title = {Discrete-time dynamic average consensus},
journal = {Automatica},
volume = {46},
number = {2},
pages = {322-329},
year = {2010},
%issn = {0005-1098},
%doi = {https://doi.org/10.1016/j.automatica.2009.10.021},
author = {Minghui Zhu and Sonia Martínez}
%keywords = {Consensus algorithms, Multi-agent systems, Cooperative control},
}

@ARTICLE{DAC_Mag19,
  author={Kia, Solmaz S. and Van Scoy, Bryan and Cortes, Jorge and Freeman, Randy A. and Lynch, Kevin M. and Martinez, Sonia},
  journal={IEEE Control Systems Magazine}, 
  title={Tutorial on Dynamic Average Consensus: The Problem, Its Applications, and the Algorithms}, 
  year={2019},
  volume={39},
  number={3},
  pages={40-72} 
}

@article{ResCon_Sundaram_13,
title = {Resilient Asymptotic Consensus in Robust Networks}, 
author = {LeBlanc, Heath J. and Zhang, Haotian and Koutsoukos, Xenofon and Sundaram, Shreyas}, 
journal = {IEEE Journal on Selected Areas in Communications}, 
year={2013},
volume={31},
number={4},
pages={766-781}
}

@article{ResDAC,
title = {Resilient Dynamic Average-Consensus of Multiagent Systems}, 
author = {Iqbal, Muhammad and Qu, Zhihua and Gusrialdi, Azwirman}, 
journal = {IEEE Control Systems Letters},
year = {2022},
volume = {6}, 
pages = {3487-3492}
}

@ARTICLE{ResCon_Ishii_20,
  author={Wang, Yuan and Ishii, Hideaki},
  journal={IEEE Transactions on Control of Network Systems}, 
  title={Resilient Consensus Through Event-Based Communication}, 
  year={2020},
  volume={7},
  number={1},
  pages={471-482}
  %doi={10.1109/TCNS.2019.2924235}
  }

@ARTICLE{DAC_Murray_IFAC05,
  author={Spanos, Demetri P. and Olfati-Saber, Reza and Murray, Richard M.},
  journal={Proceedings of IFAC World Congress}, 
  title={Dynamic Consensus for Mobile Networks}, 
  year={2005},
  pages={1-6},
  %doi={10.1109/TAC.2019.2954363}
  }

@ARTICLE{REWB,
  author={Bhattacharyya, Shamik and Rokade, Kiran and Kalaimani, Rachel Kalpana},
  journal={IEEE Transactions on Control of Network Systems}, 
  title={Distributed Estimation over Directed Graphs Resilient to Sensor Spoofing}, 
  year={2023},
  volume={},
  number={},
  pages={1-11}
  %doi={10.1109/TCNS.2023.3247560}
  }

@ARTICLE{AttckSrvey_IndInfo22,
  author={He, Wangli and Xu, Wenying and Ge, Xiaohua and Han, Qing-Long and Du, Wenli and Qian, Feng},
  journal={IEEE Transactions on Industrial Informatics}, 
  title={Secure Control of Multiagent Systems Against Malicious Attacks: A Brief Survey}, 
  year={2022},
  volume={18},
  number={6},
  pages={3595-3608}
  %doi={10.1109/TII.2021.3126644}
  }

@INPROCEEDINGS{ResCon_Trstd_14,
  author={Abbas, Waseem and Vorobeychik, Yevgeniy and Koutsoukos, Xenofon},
  booktitle={2014 7th International Symposium on Resilient Control Systems (ISRCS)}, 
  title={Resilient consensus protocol in the presence of trusted nodes}, 
  year={2014},
  volume={},
  number={},
  pages={1-7}
  %doi={10.1109/ISRCS.2014.6900100}
  }

@ARTICLE{TdNodes_Xeno_TCNS18,
  author={Abbas, Waseem and Laszka, Aron and Koutsoukos, Xenofon},
  journal={IEEE Transactions on Control of Network Systems}, 
  title={Improving Network Connectivity and Robustness Using Trusted Nodes With Application to Resilient Consensus}, 
  year={2018},
  volume={5},
  number={4},
  pages={2036-2048}
  %doi={10.1109/TAC.2008.2009515}
  }

@ARTICLE{ResCon_DoS,
  author={Zuo, Zhiqiang and Cao, Xiong and Wang, Yijing and Zhang, Wentao},
  journal={IEEE Transactions on Systems, Man, and Cybernetics: Systems}, 
  title={Resilient Consensus of Multiagent Systems Against Denial-of-Service Attacks}, 
  year={2022},
  volume={52},
  number={4},
  pages={2664-2675},
  % doi={10.1109/TSMC.2021.3051730} 
  }

\end{document}